\documentclass[aps,pra,twocolumn,superscriptaddress,amsmath,amssymb,showpacs,notitlepage,longbibliography]{revtex4-1}
\pdfoutput=1
\usepackage{algorithm}
\usepackage{color}
\usepackage{algorithmic}
\usepackage{amsmath}
\usepackage{braket}
\usepackage{multirow}
\usepackage{amssymb}
\usepackage{amsthm}
\usepackage{bm}
\usepackage{thmtools}
\declaretheorem{theorem}
\usepackage{array}
\usepackage{url}
\usepackage{hyperref}
\hypersetup{
    colorlinks=true,
    linkcolor=blue,
    filecolor=magenta,      
    urlcolor=cyan,
    citecolor=red
}
\usepackage{graphicx}
\usepackage{bibentry}
\usepackage{booktabs}
\usepackage{dcolumn}% Align table columns on decimal point
\usepackage{bm}
\usepackage{blkarray}
\usepackage{xparse}
\usepackage{mathtools}
\usepackage{microtype}
\usepackage{enumerate}
\usepackage{soul}

\usepackage{makecell}
\providecommand{\U}[1]{\protect\rule{.1in}{.1in}}
\newtheorem*{theorem*}{Theorem}

\newtheorem{definition}{Definition}

\begin{document}

%\title{On the minimization of the Wigner entropy for Wigner non-negative states}
%\title{A photonic qubit and a subset of Wigner non-negative states verify the minimum Wigner entropy conjecture}
%\title{Subsets of Wigner non-negative states that verify the minimum Wigner entropy conjecture}
%\title{Wigner non-negative states that verify the minimum Wigner entropy conjecture}
\title{Optimal Probe State for Phase Estimation Under Covariant Measurement}

\author{Qipeng Qian}
\affiliation{Program in Applied Mathematics, The University of Arizona, Tucson, Arizona 85721, United States of America}
\author{Christos N. Gagatsos}
\affiliation{Department of Electrical and Computer Engineering, The University of Arizona, Tucson, Arizona, 85721, United States of America}
\affiliation{Wyant College of Optical Sciences, The University of Arizona, Tucson, Arizona, 85721, United States of America}
\affiliation{Program in Applied Mathematics, The University of Arizona, Tucson, Arizona 85721, United States of America}

\begin{abstract}
We study the optimization of input states for phase estimation under covariant measurements. Building on Holevo's framework, which provides the optimal covariant measurement for a fixed input state, we further optimize over the input state itself. For a general even $2\pi$-periodic cost function with non-negative Fourier coefficients, we derive a necessary and sufficient condition for the optimal input state: Its Fock coefficients are determined, up to arbitrary phases, by the eigenvector corresponding to the largest eigenvalue of a Toeplitz matrix defined by the cost function. This characterization yields an explicit expression for the attainable lower bound of the average cost under optimal covariant measurements and shows that this bound asymptotically approaches zero in the infinite-energy limit. For the specific cost function $W(\theta,\tilde{\theta})=4\sin^2[(\theta-\tilde{\theta})/2]$, we obtain the optimal input state and the corresponding minimum average cost in closed form, demonstrating Heisenberg scaling with respect to the mean photon number. 
\end{abstract}

\maketitle

\section{Introduction}\label{sec:intro}

Estimating phase shifts is a fundamental task in quantum information processing since phase plays a critical role in the evolution and manipulation of quantum states. Phase estimation enables key quantum technologies and remains a central paradigm for demonstrating quantum advantage in quantum-enhanced sensing~\cite{Helstrom1976QuantumDetection,Dowling2008QuantumOpticalMetrology,Toth2014QuantumMetrology,Pirandola2018PhotonicQuantumSensing,Polino2020PhotonicQuantumMetrology,Sidhu2020GeometricPerspective,Taylor2016QuantumMetrologyBiology}. Furthermore, phase estimation underpins foundational studies of quantum coherence and interference, making it indispensable for both theoretical and practical advancements in quantum technologies~\cite{Paris2009QuantumEstimation,Giovannetti2011Advances,Pezze2018QuantumMetrology,Streltsov2017Coherence,Caves1981Interferometer}. Over the past several decades, extensive research has focused on probing the ultimate quantum limits for estimating phase shifts and identifying probe states and measurement strategies that can achieve these limits~\cite{Berry2000OptimalStates,Berry2001OptimalInputStates,Monras2006OptimalPhaseMeasurements,Pezze2008MachZehnder,Jarzyna2012QuantumInterferometry,Tsang2012ZivZakai,Rubio2018NonAsymptotic}. 

For the simplest single-mode case, \cite{demkowicz2011optimal} analyzed optimal measurements and probe states within the Bayesian framework, providing an analytical expression for the average cost of estimation with arbitrary prior distributions. By adopting the cost function,
\begin{eqnarray}
W(\theta,\Tilde{\theta})=4\sin^2\frac{\theta-\Tilde{\theta}}{2},
\label{eq:cost-sin}
\end{eqnarray}
they derived optimal measurement strategies that minimize the expected cost. However, their approach involves singular value decomposition of matrices depending on the unknown phase, making both analytical and numerical evaluations challenging, especially as the system dimension increases.

Building upon this framework, \cite{rodríguez2025knowledge} investigated optimal single-mode Gaussian probe states for phase estimation, within Fisher information framework, using homodyne detection. This study highlighted the role of prior uncertainty in determining effective probe states, showing transitions from coherent states to squeezed vacuum states as prior uncertainty decreases and more energy is allocated to squeezing. 

In the present work, we focus on phase estimation of single-mode states while restricting our analysis to covariant measurements. Covariant measurements, which respect the phase-shift symmetry inherent in phase estimation, form a natural and analytically tractable class of measurements, particularly when dealing with symmetric cost functions. Although Holevo's framework \cite{holevo2011probabilistic} provides optimal covariant measurements in this setting, it remains unclear whether explicit expressions for the minimum achievable cost under optimal covariant measurements can be obtained and whether these expressions have any asymptotic behavior. 

In this work, we address this gap by providing an explicit analytical expression for the necessary and sufficient condition for the optimal input state, which leads to the minimum cost under optimal covariant measurements, and by rigorously showing that this minimum cost asymptotically approaches zero in the infinite-energy limit. Furthermore, we demonstrate that the projected coefficients of the optimal input state in this scenario exhibit central symmetry, offering insight on the structure of probe states that approach the ultimate performance limit under covariant measurements. We also compute the optimal probe state under the specific cost function in Eq. (\ref{eq:cost-sin}) and explicitly verify the asymptotic zero-cost behavior, providing practical confirmation of our theoretical findings. 

This paper is organized as follows. Section \ref{sec:Covariant} revisits covariant measurements, explains their relevance to phase estimation, and discusses key theorems from previous works. In Section \ref{sec:lower bound}, we present our analytical result providing an explicit expression for the minimum cost achievable under optimal covariant measurements with appropriately chosen input states and prove that this expression asymptotically approaches zero in the infinite-energy limit (Appendix \ref{appdxa:Central Symmetry} shows the properties of the optimal probe states under covariant measurements). In Section \ref{sec: Single-mode Cases}, we consider the cost function in Eq.~(\ref{eq:cost-sin}) as a specific example and analytically determine the optimal input state and its asymptotic behavior. Finally, Section~\ref{sec:conclusion} summarizes our findings.

\section{Covariant Measurement}\label{sec:Covariant}

The phase estimation problem inherently exhibits phase-shift symmetry, which naturally motivates the consideration of measurement strategies that respect this symmetry. In particular, covariant measurements are well-suited for such scenarios, as they are designed to transform consistently under the same symmetry operations as the parameter being estimated. In this section, we formally introduce covariant measurements and discuss their key properties relevant to phase estimation. This section follows the standard framework of covariant measurements developed by Holevo~\cite{holevo2011probabilistic}. We recall the relevant definitions and results in order to establish the notation and theoretical basis for the analysis in the following sections. 

\begin{definition}[\cite{holevo2011probabilistic}]
    Let $G$ be a parametric group of transformations of a set $\Theta$ and $g \to V_g$
be a (continuous) projective unitary representation of $G$ in a Hilbert space. Let $M(d\theta)$ be a measurement with values in $\Theta$. The measurement $M(d\theta)$ is covariant with respect to representation $g \to V_g$ if 
\begin{eqnarray}
    V_g^{\dag}M(B)V_g=M(B_{g^{-1}}),
\end{eqnarray}
for any $B$ belonging to the $\sigma$-field of Borel subsets of $\Theta$, where 
\begin{eqnarray}
    B_g=\{\theta:\theta=g\theta',\theta'\in B\}.
\end{eqnarray}
\end{definition}

Informally, if the parameter $\theta$ undergoes a transformation $g \in G$, then the outcome statistics of a covariant measurement shift accordingly. However, it is important to note that the measurement outcome $\Tilde{\theta}$ itself is \emph{not} transformed—only the state is. As a result, we compare the outcome distributions in the original coordinate system, leading to a relationship between $p_{g\theta}(\Tilde{\theta})$ and $p_\theta(g^{-1} \Tilde{\theta})$. This is made precise in the following identity,
\begin{eqnarray}
    p_{g\theta} (\Tilde{\theta})
    &=& \mathrm{Tr} [ \rho_{g\theta} M_{\Tilde{\theta}} ] \nonumber\\
    &=& \mathrm{Tr} [ V_{g} \rho_{\theta} V_{g}^{\dagger} M_{\Tilde{\theta}} ] \nonumber\\
    &=& \mathrm{Tr} [ \rho_{\theta} V_{g}^{\dagger} M_{\Tilde{\theta}} V_{g} ] \nonumber\\
    &=& \mathrm{Tr} [ \rho_{\theta} M_{g^{-1}\Tilde{\theta}} ] \nonumber\\
    &=& p_{\theta} (g^{-1}\Tilde{\theta}).
\end{eqnarray}
Thus, for all covariant measurements, the probability distribution of the results $\Tilde{\theta}$ correctly reflects changes in the value of $\theta$ under the group action $g\in G$. This makes it a theoretically valid measurement strategy for parameter estimation.

The same logic directly leads to another important property of covariant measurements: Their performance remains independent of any prior information about the true parameter. The following theorem, as proved in \cite{holevo2011probabilistic}, rigorously establishes this property.
\begin{theorem}[\cite{holevo2011probabilistic}]
\label{thm:1}
    For a covariant measurement $M$ and any cost function $W$ satisfying $W(g\theta,g\Tilde{\theta})=W(\theta,\Tilde{\theta})$, we have 
    \begin{eqnarray}
        \forall \theta,\quad \Bar{C}=C(\theta)=C(0), 
        \label{eq:equiv under prior}
    \end{eqnarray}
    where 
    \begin{eqnarray}
        C(\theta)=\mathrm{Tr}\left[ \int_0^{2\pi}d\Tilde{\theta}M(\Tilde{\theta})W(\theta,\Tilde{\theta})\hat{\rho}_{\theta} \right]
        \label{eq:C(theta)}
    \end{eqnarray}
    is the average cost and
    \begin{eqnarray}
        \Bar{C}=\int d\theta p(\theta)C(\theta)
    \label{eq:ave-cost}
    \end{eqnarray}
    is the Bayesian average cost. 
    \begin{proof}
        \begin{eqnarray}
            C(g\theta)&=&\int d\Tilde{\theta} W(g\theta,g\Tilde{\theta})Tr[\hat{\rho}_{g\theta}M(g\Tilde{\theta})] \nonumber\\
            &=&\int d\Tilde{\theta} W(\theta,\Tilde{\theta})Tr[\hat{\rho}_{\theta}V_g^{\dag}M(g\Tilde{\theta})V_g] \nonumber\\
            &=&\int d\Tilde{\theta} W(\theta,\Tilde{\theta})Tr[\hat{\rho}_{\theta}M(\Tilde{\theta})]\nonumber\\
            &=&C(\theta).
        \end{eqnarray}
        Thus, $C(\theta)$ is independent of $\theta$. 
    \end{proof}
\end{theorem}
We thus see that for covariant measurements, the prior distribution of the parameter of interest is irrelevant to the measurement performance. Building upon this property, Holevo \cite{holevo2011probabilistic} established an explicit construction of optimal covariant measurements for phase estimation, as formalized in the following theorem.
\begin{theorem}[\cite{holevo2011probabilistic}]
\label{thm:2}
The covariant measurement $M(\theta)$ defined by 
\begin{eqnarray}
    \langle n|M(d\theta)|m\rangle=\frac{d\theta}{2\pi}e^{i(m-n)\theta}\frac{\langle n|\psi\rangle\langle \psi|m\rangle}{|\langle n|\psi\rangle||\langle \psi|m\rangle|},
\end{eqnarray}
is the optimal covariant measurement of angle of rotation $\theta$ for states that can be written as 
\begin{eqnarray}
    e^{-i\theta \hat{N}}|\psi\rangle\langle\psi|e^{i\theta \hat{N}},
\end{eqnarray}
where $\hat{N}$ is the photon number operator, $\{|m\rangle\}$ are the eigenstates of $\hat{N}$, and for any even $2\pi$-periodic cost function $W$ that can be represented by a Fourier series and satisfies 
\begin{eqnarray}
    W(\theta,\Tilde{\theta})=W(\theta-\Tilde{\theta})&=&w_0-\sum_{k=1}^{\infty}w_k\cos\left(k(\theta-\Tilde{\theta})\right);\nonumber\\ 
    w_k&\geq&0,\quad \forall k\geq1.
    \label{eq: W}
\end{eqnarray}
\end{theorem}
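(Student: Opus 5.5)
By Theorem~\ref{thm:1}, the Bayesian average cost of any covariant measurement equals $C(0)$. So the plan is to (i) parametrize every covariant measurement, (ii) write $C(0)$ as a function of that parametrization, and (iii) bound it from below by a quantity that the measurement in the statement attains.

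\emph{Parametrization.} Covariance under $V_\theta=e^{-i\theta\hat N}$ forces
\begin{eqnarray}
M(d\theta)=\frac{d\theta}{2\pi}\,e^{-i\theta\hat N}\,M_0\,e^{i\theta\hat N},
\end{eqnarray}
with a fixed seed operator $M_0\geq 0$. Normalization $\int M(d\theta)=I$ kills all off-diagonal contributions after the $\theta$-integration. It therefore holds if and only if $\langle n|M_0|n\rangle=1$ for all $n$. Writing $\mu_{nm}=\langle n|M_0|m\rangle$, the admissible set is the set of positive semidefinite matrices with unit diagonal. Positivity then gives $|\mu_{nm}|\leq 1$.

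\emph{Cost.} Set $\psi_n=\langle n|\psi\rangle$ and $\rho_0=|\psi\rangle\langle\psi|$. Then
\begin{eqnarray}
C(0)=\int\frac{d\tilde\theta}{2\pi}\,W(\tilde\theta)\sum_{n,m}\psi_m\bar\psi_n\,\mu_{nm}\,e^{i(m-n)\tilde\theta}.
\end{eqnarray}
Insert the Fourier series of Eq.~(\ref{eq: W}). Using $\int\frac{d\tilde\theta}{2\pi}\cos(k\tilde\theta)e^{i(m-n)\tilde\theta}=\tfrac12\delta_{|m-n|,k}$, this becomes
\begin{eqnarray}
C(0)=w_0-\frac12\sum_{k\geq1}w_k\sum_{|n-m|=k}\psi_m\bar\psi_n\,\mu_{nm}.
\end{eqnarray}
Here the diagonal terms contribute $w_0\sum_n|\psi_n|^2=w_0$.

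\emph{Lower bound and attainment.} Since $w_k\geq0$ and $|\mu_{nm}|\leq1$, we have
\begin{eqnarray}
\mathrm{Re}\,\psi_m\bar\psi_n\mu_{nm}\leq|\psi_n||\psi_m|,
\end{eqnarray}
and $C(0)$ is real. Hence
\begin{eqnarray}
C(0)\geq w_0-\frac12\sum_{k\geq1}w_k\sum_{|n-m|=k}|\psi_n||\psi_m|.
\end{eqnarray}
This bound is independent of $M_0$. The choice
\begin{eqnarray}
\mu_{nm}=\frac{\bar\psi_m\psi_n}{|\psi_n||\psi_m|}
\end{eqnarray}
achieves equality termwise. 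This $M_0$ is a rank-one Gram matrix $|u\rangle\langle u|$ with $u_n=\psi_n/|\psi_n|$, so it is positive semidefinite with unit diagonal. Conjugating by $e^{-i\theta\hat N}$ reproduces exactly the matrix elements in the statement, $e^{i(m-n)\theta}\psi_n\bar\psi_m/(|\psi_n||\psi_m|)$. So the stated measurement is covariant, normalized, and optimal.

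The main obstacle is the careful handling of Fock states with $\psi_n=0$. On that support the phase factor is undefined; there we set $\mu_{nn}=1$ and $\mu_{nm}=0$ for $n\neq m$. This keeps $M_0$ positive semidefinite with unit diagonal, and it does not affect the cost because those terms carry a factor $\psi_n=0$. A second subtlety is justifying the interchange of the Fourier sum with the trace and integral, which needs convergence of the series (e.g.\ $\sum w_k<\infty$). I would state this as an assumption implicit in ``represented by a Fourier series.''
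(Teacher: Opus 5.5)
Your proof is correct. The paper does not prove this theorem itself; it is quoted from Holevo. Your route (reduce to $C(0)$ via Theorem~\ref{thm:1}, parametrize covariant POVMs by a positive-definite kernel $\mu_{nm}$ with unit diagonal, use $|\mu_{nm}|\le 1$, then align phases) is essentially Holevo's original argument. As a by-product it derives Eq.~(\ref{eq:cost of optimal M}), which the paper only states.

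Two small refinements would tighten it.

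\emph{Justifying the parametrization.} The step ``covariance forces $M(d\theta)=\frac{d\theta}{2\pi}e^{-i\theta\hat N}M_0e^{i\theta\hat N}$'' has a one-line proof you should include. Covariance gives $\langle n|M(B+\phi)|m\rangle=e^{-i(n-m)\phi}\langle n|M(B)|m\rangle$. Hence $e^{i(n-m)\theta}\langle n|M(d\theta)|m\rangle$ is a translation-invariant complex measure on the circle, so it equals a constant $\mu_{nm}$ times $d\theta/2\pi$.

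This also shows that $M_0$ is in general only a positive-definite kernel, not a bounded operator. Your optimal $u$ has $|u_n|=1$, so $u\notin\ell^2$ whenever $\psi$ has infinite Fock support. The resulting $M$ is nevertheless a bona fide POVM, since $\langle\phi|M(B)|\phi\rangle=\int_B|\sum_m\phi_m\bar u_m e^{im\theta}|^2\,\frac{d\theta}{2\pi}$. This gives $0\le M(B)\le I$ and, by Parseval, $M([0,2\pi))=I$.

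\emph{The summability assumption is automatic.} It holds for any cost bounded below. The Fej\'er kernel is non-negative, so the Fej\'er means of $W$ at $0$, namely $w_0-\sum_{k\le K}(1-\frac{k}{K+1})w_k$, are at least $\inf W$. With $w_k\ge 0$ this yields $\sum_k w_k\le w_0-\inf W<\infty$. The paper makes a related remark in Section~\ref{sec:lower bound}.
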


Combining Theorems \ref{thm:1} and \ref{thm:2}, we have the average cost $\Bar{C}$ given by the optimal measurement as 
\begin{eqnarray}
    \Bar{C}=w_0-\sum_{k=1}^{\infty}w_k\Big( \sum_{n=0}^{\infty}|\langle\psi|n\rangle||\langle n+k|\psi\rangle| \Big). 
    \label{eq:cost of optimal M}
\end{eqnarray}

\section{Attainable lower bound for optimal covariant measurement}
\label{sec:lower bound}

Holevo's result reviewed in the previous section determines the optimal covariant measurement for a fixed input state. In this section, we take the next step and optimize the input state itself within this covariant-measurement framework. Specifically, starting from the average cost in Eq.~(\ref{eq:cost of optimal M}), we derive a necessary and sufficient condition on the photon-number amplitudes of the input state that minimizes the cost under the optimal covariant measurement. This also leads to an explicit expression for the attainable lower bound of the average cost. 

Denoting $c_n=\langle\psi|n\rangle$, where $|n\rangle$ is a Fock vector, we have
\begin{eqnarray}
\Bar{C}^*
&=&\min_{|\psi\rangle}\Bar{C} \nonumber\\
&=&w_0-\max \sum_{k=1}^{N} w_k \left[\sum_{m=0}^{N-k} |c_m| |c_{m+k}|\right] \nonumber\\
&=&w_0-\max \sum_{i,j=0, i<j}^{N} w_{j-i} |c_i| |c_{j}| \nonumber\\
&=&w_0-\frac{1}{2} \max \sum_{i,j=0}^{N} w_{|i-j|} |c_i| |c_{j}| \nonumber\\
&=&w_0-\frac{1}{2} \max |c|^T\mathbf{W}|c|, 
\label{eq:cost_matrix_form}
\end{eqnarray}
where $|c|=(|c_0|,\ldots,|c_N|)^T$ and $\mathbf{W}_{ij}=w_{|i-j|}$ if $i\neq j$, while $\mathbf{W}_{ij}=0$ if $i=j$. The normalization of the input state implies $\| |c| \|_2=1$. Therefore, the optimization of the input state has been reduced to maximizing a quadratic form over normalized non-negative vectors. This observation leads to the following characterization of the optimal input state and the attainable lower bound. 

\begin{theorem}[Attainable lower bound under optimal covariant measurements]
\label{thm:attainable_lower_bound}
For input states supported on $\mathrm{span}\{|0\rangle,\ldots,|N\rangle\}$, let $\mathbf{W}$ be the Toeplitz matrix defined in Eq.~\eqref{eq:cost_matrix_form}. Then the minimum average cost achievable by optimizing the input state under the optimal covariant measurement is
\begin{eqnarray}
    \Bar{C}^*=w_0-\frac{\lambda_{\max}}{2}, 
    \label{eq:barC_star_lambda}
\end{eqnarray}
where $\lambda_{\max}$ is the largest eigenvalue of $\mathbf{W}$. Moreover, an optimal input state is obtained by choosing
\begin{eqnarray}
    |c_n|=\nu^{\max}_n,\qquad n=0,\ldots,N, 
    \label{eq:optimal_amplitudes_general}
\end{eqnarray}
where $\nu^{\max}$ is a normalized eigenvector associated with $\lambda_{\max}$ that can be chosen to be entrywise non-negative, i.e., $\nu^{\max}_n\geq 0$ for all $n$. The phases of the coefficients $c_n$ are arbitrary. 
\end{theorem}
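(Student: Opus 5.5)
The proof reduces to a Rayleigh-quotient argument combined with a Perron--Frobenius-type sign argument. Starting from Eq.~\eqref{eq:cost_matrix_form}, the task is to compute $\max\{x^T\mathbf{W}x : x\in\mathbb{R}^{N+1},\ x\geq 0,\ \|x\|_2=1\}$, where $x=|c|$. We will show that this constrained maximum equals the unconstrained maximum of the quadratic form over the unit sphere, which is $\lambda_{\max}$ by the Rayleigh--Ritz theorem. Substituting into Eq.~\eqref{eq:cost_matrix_form} then gives Eq.~\eqref{eq:barC_star_lambda}.

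\textbf{Upper bound.} Since $\mathbf{W}$ is real symmetric, with $\mathbf{W}_{ij}=w_{|i-j|}$ off the diagonal and zero on it, Rayleigh--Ritz gives $x^T\mathbf{W}x\le\lambda_{\max}\|x\|_2^2=\lambda_{\max}$ for every unit vector $x$. In particular this holds for every admissible $|c|$, so $\Bar{C}\ge w_0-\lambda_{\max}/2$ for every input state supported on $\mathrm{span}\{|0\rangle,\ldots,|N\rangle\}$.

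\textbf{Attainability.} This is the step needing care: we must exhibit a non-negative unit eigenvector for $\lambda_{\max}$. Take any real normalized eigenvector $v$ with $\mathbf{W}v=\lambda_{\max}v$, and set $\nu=(|v_0|,\ldots,|v_N|)^T$, so that $\|\nu\|_2=1$. All entries of $\mathbf{W}$ are non-negative, because $w_k\ge0$ for $k\ge1$ by Theorem~\ref{thm:2} and the diagonal is zero. Hence
\[
\nu^T\mathbf{W}\nu=\sum_{i,j}w_{|i-j|}|v_i||v_j|\ \ge\ \Bigl|\sum_{i,j}\mathbf{W}_{ij}v_iv_j\Bigr| = \lambda_{\max},
\]
where the last equality uses $\lambda_{\max}\ge0$, which holds because $\mathrm{tr}\,\mathbf{W}=0$ forces the largest eigenvalue to be non-negative. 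Combined with the Rayleigh bound, $\nu$ maximizes the Rayleigh quotient, so $\nu^T\mathbf{W}\nu=\lambda_{\max}$. Any unit vector attaining the maximum of the Rayleigh quotient lies in the top eigenspace; this follows by expanding $\nu$ in an orthonormal eigenbasis. Therefore $\mathbf{W}\nu=\lambda_{\max}\nu$ with $\nu\ge0$, and we may take $\nu^{\max}=\nu$. Perron--Frobenius is not needed, and irreducibility is not needed either, since some $w_k$ may vanish.

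\textbf{Assembly.} Choosing $|c_n|=\nu^{\max}_n$ achieves $|c|^T\mathbf{W}|c|=\lambda_{\max}$, hence $\Bar{C}=w_0-\lambda_{\max}/2$, and the phases are free. The phases are free because Eq.~\eqref{eq:cost of optimal M} depends only on the moduli $|c_n|$: Holevo's measurement in Theorem~\ref{thm:2} absorbs the phases through the factor $\langle n|\psi\rangle\langle\psi|m\rangle/(|\langle n|\psi\rangle||\langle\psi|m\rangle|)$.

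For the necessity direction mentioned in the abstract, the same argument applies. Any optimal $|c|$ attains the Rayleigh maximum, so it must be a non-negative vector in the $\lambda_{\max}$-eigenspace. This characterizes the optimal states up to phases.
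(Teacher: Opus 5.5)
Your proof is correct, and its skeleton matches the paper's. Rayleigh--Ritz gives $|c|^T\mathbf{W}|c|\le\lambda_{\max}$, and the bound is then attained by an entrywise non-negative unit eigenvector for $\lambda_{\max}$.

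The difference is in how that eigenvector is obtained. The paper cites the Perron--Frobenius theorem for entrywise non-negative matrices. You prove the needed fact directly: replacing a top eigenvector $v$ by $(|v_0|,\ldots,|v_N|)^T$ cannot decrease the quadratic form. So the modulus vector also attains the Rayleigh maximum and must lie in the top eigenspace. Your detour through $\mathrm{tr}\,\mathbf{W}=0$ to get $\lambda_{\max}\ge 0$ is harmless but unnecessary, since $\sum_{i,j}\mathbf{W}_{ij}|v_i||v_j|\ge\sum_{i,j}\mathbf{W}_{ij}v_iv_j=\lambda_{\max}$ holds whatever the sign.

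What your route buys:
\begin{enumerate}
\item It is self-contained and visibly independent of irreducibility. This matters because $\mathbf{W}$ can be reducible. For example, if only even-index $w_k$ are nonzero, the even and odd indices decouple, even though the paper's appendix asserts irreducibility ``by construction.'' The theorem's proof itself only needs the weak form of Perron--Frobenius, which holds without irreducibility.
\item It gives the converse that the paper's proof omits. Every optimal $|c|$ attains the Rayleigh maximum and is therefore a non-negative vector in the $\lambda_{\max}$-eigenspace. This is what actually backs the ``necessary and sufficient'' wording in the abstract and conclusion.
\end{enumerate}

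What the paper's citation buys is brevity. When irreducibility does hold, as in the tridiagonal example with $w_1>0$, it also gives strict positivity and simplicity of the top eigenvector, which the appendix uses for central symmetry.
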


\begin{proof}
Since $\mathbf{W}$ is real and symmetric, the Rayleigh--Ritz variational principle gives 
\begin{eqnarray}
    \max_{\|x\|_2=1} x^T\mathbf{W}x=\lambda_{\max}. 
\end{eqnarray}
In Eq.~\eqref{eq:cost_matrix_form}, however, the vector to be optimized is $|c|$, whose entries are non-negative. Since $\mathbf{W}$ is entrywise non-negative, the Perron--Frobenius theorem~\cite{Perron1907ZurTD,frobenius1912matrizen} ensures that an eigenvector associated with $\lambda_{\max}$ can be chosen to be entrywise non-negative. Hence this eigenvector is admissible as $|c|$. Choosing $|c|=\nu^{\max}$ therefore achieves the maximum of the quadratic form in Eq.~\eqref{eq:cost_matrix_form}, giving 
\begin{eqnarray}
    \Bar{C}^*=w_0-\frac{\lambda_{\max}}{2}.
\end{eqnarray}
This proves the claim. 
\end{proof}

Since the optimal input state is determined by the Perron eigenvector $\nu^{\max}$ of the Toeplitz matrix $\mathbf{W}$, structural properties of $\nu^{\max}$ directly translate into structural properties of the optimal photon-number amplitudes. In Appendix~\ref{appdxa:Central Symmetry}, we show that for the finite-dimensional Toeplitz matrices considered here, this eigenvector is centrally symmetric; equivalently, with the indexing convention of Eq.~\eqref{eq:cost_matrix_form}, the optimal amplitudes satisfy 
\begin{eqnarray}
    |c_n|=|c_{N-n}|. 
\end{eqnarray}

Moreover, notice that when $N\to\infty$, $\mathbf{W}$ is an infinite-dimensional Toeplitz matrix and $\{w_k\}_{k=1}^{\infty}$ is absolutely summable as long as $W(0)<\infty$. By Theorem 1.1 of~\cite{bottcher2000toeplitz}, we have 
\begin{eqnarray}
    \lambda_{\max}
    &=&\mathbf{L}^{\infty}\left[\sum_{k=-\infty}^{\infty}w_{|k|}e^{ik\theta}\right] \nonumber\\
    &=&\mathbf{L}^{\infty}\left[2\sum_{k=1}^{\infty}w_{k}\cos(k\theta)\right] \nonumber\\
    &=&2\sum_{k=1}^{\infty} w_k, 
    \label{eq:lambda_max}
\end{eqnarray}
where $\mathbf{L}^\infty[f] := \sup_x |f(x)|$. 

Returning to the Fourier representation of the cost function $W$, we can see that as long as $W(0)=0$, we obtain
\begin{eqnarray}
    \Bar{C}^*=w_0-\sum_{k=1}^{\infty}w_k=0, 
\end{eqnarray}
which indicates that, for the sequence of input states constructed from $\nu^{\max}$, the optimal covariant measurement asymptotically yields the best possible estimation performance.

\section{Examples and the best input state}\label{sec: Single-mode Cases}

We now illustrate the general result of Section~\ref{sec:lower bound} with the cost function in Eq. (\ref{eq:cost-sin}). For this cost function, the Fourier coefficients take the simple form $w_0=2$, $w_1=2$, and $w_k=0$ for all $k\geq 2$. Consequently, the Toeplitz matrix $W$ becomes tridiagonal, which allows us to obtain both the optimal input state and the corresponding minimum average cost in closed form. This example also makes explicit the asymptotic scaling of the optimal covariant strategy in terms of the mean photon number. 

For this choice of coefficients, the finite-dimensional Toeplitz matrix has nonzero entries only on the first off-diagonals. Assuming an eigenvector with components $v_j=\sin(j\theta)$ and boundary conditions $v_{0}=v_{N+1}=0$, we have 
\begin{eqnarray}
    (\mathbf{W}\nu)_j&=&2v_{j-1}+2v_{j+1} \nonumber\\
    &=&4\sin(j\theta)\cos(\theta) \nonumber\\
    &=&\lambda v_{j},
\end{eqnarray}
where $\lambda=4\cos(\theta)$. Taking into account $v_{0}=v_{N+1}=0$, we can set $\theta_n=\frac{n\pi}{N+1}$ to obtain $N$ different eigenvalues of $W$. We then have the largest positive eigenvalue 
\begin{eqnarray}
    \lambda_{max}=4\cos(\frac{\pi}{N+1})
\end{eqnarray}
and the corresponding positive eigenvector 
\begin{eqnarray}
    \nu^{max}_j=\frac{1}{A}\sin(\frac{j\pi}{N+1}),\quad j=1,2,...,N,
    \label{eq:nu_{max}}
\end{eqnarray}
where $A=\sqrt{\sum_{j=1}^N\sin^2(\frac{j\pi}{N+1})}$. 

Thus, for any input state $|\psi\rangle$ satisfying 
\begin{eqnarray}
    |\langle\psi|n\rangle|=\frac{1}{A}\sin(\frac{(n+1)\pi}{N+1}),\quad n\in[0,N-1],
\end{eqnarray}
we have 
\begin{eqnarray}
    \Bar{C}^*=w_0-\frac{\lambda_{max}}{2}=2-2\cos(\frac{\pi}{N+1}),
\end{eqnarray}
which tends to $0$ as $N\to\infty$. This is consistent with the results in the previous section. 

Furthermore, when $N\to\infty$, we have 
\begin{eqnarray}
    \Bar{n}&=&\lim_{N\to\infty}\frac{\sum_{n=0}^{N-1}\sin^2(\frac{(n+1)\pi}{N+1})n}{A^2}\nonumber\\   
    &=&\frac{N-1}{2}.
\end{eqnarray}
Also, since $\cos(x)\approx1-\frac{x^2}{2}$ when $x\to0$, we have 
\begin{eqnarray}
    \Bar{C}^*&=&2-2\cos(\frac{\pi}{2(\Bar{n}+1)})\label{eq:c of energy}\\
    &\approx&2-2(1-\frac{(\frac{\pi}{2(\Bar{n}+1)})^2}{2})\nonumber\\
    &=&\frac{\pi^2}{4(\Bar{n}+1)^2}.
    \label{eq:decrease rate}
\end{eqnarray}
Thus, the estimation error asymptotically scales as $\Bar{C}^*\sim\frac{1}{\Bar{n}^2}$. Since $\Bar{C}^*\to0$ when $N\to\infty$, by Eq. (\ref{eq:C(theta)}), we know that $W(\theta,\hat{\theta})=4\sin^2(\frac{\theta-\hat{\theta}}{2})\to0$, which implies $\theta-\hat{\theta}\to0$. For $\theta-\hat{\theta}\to0$, we have 
\begin{eqnarray}
    W(\theta,\hat{\theta})\approx(\theta-\hat{\theta})^2.
    \label{eq:W approx}
\end{eqnarray}
Thus, Eq. (\ref{eq:decrease rate}) can be approximately interpreted as $\mathbb{E}[(\theta-\hat{\theta})^2]$, which then demonstrates the Heisenberg scaling.

\section{Conclusion}\label{sec:conclusion}

In this paper, we derived the necessary and sufficient condition for the optimal input state under optimal covariant measurements in Theorem \ref{thm:attainable_lower_bound}. Specifically, the photon-number amplitudes of the optimal input state are determined, up to arbitrary phases, by the eigenvector corresponding to the largest eigenvalue of the Toeplitz matrix defined by the Fourier coefficients of the cost function. This characterization also yields an explicit form of the attainable lower bound for the average cost, which asymptotically approaches zero as the input energy increases. We also prove a central symmetry property for the optimal input state under general cost functions, providing structural insight into state design for covariant strategies. As a concrete example, we consider the specific cost function of Eq. (\ref{eq:cost-sin}), for which we explicitly construct the optimal input state and calculate the corresponding minimum average cost, thereby demonstrating the Heisenberg scaling for this case.

We anticipate that future works will elaborate on the level of practicality of covariant measurements and compare their performance to well-established measurements like homodyne detection. Also, we leave for future work the generalization to multiple modes.

\newpage
\onecolumngrid
\appendix
\section{Central Symmetry of the Eigenvector corresponding to the Largest Eigenvalue for Toeplitz Matrices}\label{appdxa:Central Symmetry}

Although we have shown in the previous section that the optimal input states are determined by $\nu^{max}$, its analytical form is generally inaccessible. In this subsection, we aim to characterize key properties of $\nu^{max}$, thereby gaining insights into the structure of the optimal input states, providing guidance for state design in covariant strategies.

Consider any finite-dimensional Toeplitz matrix satisfying
\begin{eqnarray}
    \mathbf{W}_{ij} = w_{|i-j|}, \quad w_0 = 0. 
\end{eqnarray}
We aim to prove that the eigenvector corresponding to the maximum eigenvalue of $\mathbf{W}$ is symmetric about the center, i.e.,
\begin{eqnarray}
    v_k = v_{N - k + 1}. 
\end{eqnarray}

Define the reflection matrix $\mathbf{P}$ by:
\begin{eqnarray}
    \mathbf{P}_{ij} = \delta_{i, N - j + 1}, 
\end{eqnarray}
which reverses the order of entries in a vector:
\[
\mathbf{P} v = (v_N, v_{N-1}, \dots, v_1)^T.
\]

For the Toeplitz matrix, we have:
\begin{eqnarray}
    (\mathbf{P} \mathbf{W} \mathbf{P})_{ij} = \mathbf{W}_{N - i + 1, N - j + 1}. 
\end{eqnarray}
Since
\begin{eqnarray}
    | (N - i + 1) - (N - j + 1) | = | j - i | = | i - j |, 
\end{eqnarray}
we have 
\begin{eqnarray}
    \mathbf{W}_{N - i + 1, N - j + 1} = w_{|i - j|} = \mathbf{W}_{ij}. 
\end{eqnarray}
Since $\mathbf{P}^2 = I$, we obtain:
\begin{eqnarray}
    \mathbf{P} \mathbf{W} \mathbf{P} = \mathbf{W} \quad \implies \quad [\mathbf{W}, \mathbf{P}] = 0. 
\end{eqnarray}

Let $\lambda_{\max}\in\mathbb{R}$ denote the unique maximum eigenvalue of $W$, with corresponding eigenvector $\nu^{max}$:
\begin{eqnarray}
    \mathbf{W} \nu^{max} = \lambda_{\max} \nu^{max}. 
\end{eqnarray}
Since $\mathbf{W}$ and $\mathbf{P}$ commute, we have:
\begin{eqnarray}
    \mathbf{W} (\mathbf{P} \nu^{max}) = \mathbf{P} \mathbf{W} \nu^{max} = \lambda_{\max} (\mathbf{P} \nu^{max}),
\end{eqnarray}
which indicates that $\mathbf{P} \nu^{max}$ is also an eigenvector associated with $\lambda_{\max}$.

Now, since $\mathbf{W}$ is irreducible and non-negative by construction, using the Perron–Frobenius theorem for irreducible non-negative matrices, $\lambda_{max}$ is unique and simple. Thus, $\nu^{max}$ is unique up to scaling and can additionally be chosen to have positive entries. Hence, since $\mathbf{P} \nu^{max}$ and $\nu^{max}$ are both eigenvectors for $\lambda_{\max}$ with positive entries, there exists a scalar $c$ such that:
\begin{eqnarray}
    \mathbf{P} \nu^{max} = c \nu^{max}.
\end{eqnarray}
Again, using the fact $\mathbf{P}^2=I$, we have 
\begin{eqnarray}
    |c| = 1. 
\end{eqnarray}
Since both $\mathbf{P} \nu^{max}$ and $\nu^{max}$ have positive entries, we can conclude that 
\begin{eqnarray}
    c = 1. 
\end{eqnarray}
Thus, we have 
\begin{eqnarray}
    \mathbf{P} \nu^{max} = \nu^{max}, 
\end{eqnarray}
which implies
\begin{eqnarray}
    \nu^{max}_k = \nu^{max}_{N - k + 1}.
\end{eqnarray}
This establishes the central symmetry of the maximum eigenvector under our construction of the Toeplitz matrices. Consequently, for the optimal input state $|\psi\rangle$ with $c_n=\langle\psi|n\rangle$, we should expect 
\begin{eqnarray}
    |c_{k}|=|c_{N - k + 1}|. 
\end{eqnarray}

\twocolumngrid
\bibliography{refs.bib}

\end{document}